\documentclass[runningheads]{llncs}
\usepackage[normalem]{ulem}
\usepackage[T1]{fontenc}
\usepackage{adjustbox}
\usepackage{multirow}
\usepackage{graphicx}
\usepackage{xcolor}
\usepackage{amsmath,amssymb,amsfonts}

\usepackage{hyperref}

\usepackage[linesnumbered, noend]{algorithm2e}
\SetKwInOut{Input}{input}
\SetKwInOut{Output}{output}
\RestyleAlgo{ruled}

\newcommand{\BWT}{\ensuremath{\mathrm{BWT}}}
\newcommand{\SA}{\ensuremath{\mathrm{SA}}}
\newcommand{\LCP}{\ensuremath{\mathrm{LCP}}}
\newcommand{\BWTr}{\ensuremath{\mathrm{BWT}_r}}
\newcommand{\SAr}{\ensuremath{\mathrm{SA}_r}}
\newcommand{\LCPr}{\ensuremath{\mathrm{LCP}_r}}
\newcommand{\PSV}{\ensuremath{\mathrm{PSV}}}
\newcommand{\NSV}{\ensuremath{\mathrm{NSV}}}
\newcommand{\LF}{\ensuremath{\mathrm{LF}}}

\newcommand{\SSS}{\mathcal S}

\newcommand{\rev}[1]{#1^{\text{rev}}}

\begin{document}
\title{Practical Linear-Time Computation \\ of Smallest Suffixient Sets\thanks{Funded by Basal Funds FB0001 and AFB240001, and Fondecyt grant 1260080, ANID, Chile.}}
%
%\titlerunning{Abbreviated paper title}
% If the paper title is too long for the running head, you can set
% an abbreviated paper title here
%
\author{
Francisco Olivares\inst{1,2}\orcidID{0000-0001-7881-9794} \and
Gonzalo Navarro\inst{1,2}\orcidID{0000-0002-2286-741X}
}
\authorrunning{F. Olivares and G. Navarro}
% First names are abbreviated in the running head.
% If there are more than two authors, 'et al.' is used.
%
\institute{CeBiB -- Centre for Biotechnology and Bioengineering, Chile \and 
Department of Computer Science, University of Chile, Chile\\
\email{\{folivares,gnavarro\}@uchile.cl}
}
\maketitle              % typeset the header of the contribution
\begin{abstract}
Suffixient arrays are recent structures that have attracted attention because they offer relevant pattern matching functionality in less asymptotic space than the Run-Length BWT, the de-facto standard to index highly repetitive string collections. Various algorithms exist for building them from the suffix array data structures. We present the first construction algorithm that is (i) linear-time, (ii) one-pass over the structures, and (iii) implemented and practical. This makes the construction particularly useful on large text collections, which we demonstrate empirically by showing that it dominates the space/time tradeoff map of the implemented constructions.
\keywords{Suffix arrays, Suffixient arrays, Burrows-Wheeler Transform, Repetitive text collections}
\end{abstract}

\section{Introduction and Related Work}

Many of the fastest-growing text collections are highly repetitive, meaning that most documents are near-copies of others: genome collections like the One Million Genome Initiative, software repositories like Software Heritage, and versioned text collections like Wikipedia are examples of this phenomenon. Effectively searching such huge collections requires using storage formats that exploit repetitiveness to reduce space while offering fast indexed searches. The Run-Length Burrows-Wheeler Transform (RLBWT) is at the root of  many successful compressed indexes for this scenario \cite{BW94,MNSV09,GagieNP20,CGNtalg25}. The space of this family of indexes is dominated by $r$, the number of equal-letter runs in the BWT of the collection.

A very recent development \cite{Depuydt24} further pushes towards space reduction by defining a new measure $\chi \le r$, the size of the smallest so-called {\em suffixient set} of the text. This is a subset of the text positions that, if sorted by the corresponding prefixes in colexicographic order (i.e., comparing right-to-left), yields the so-called {\em suffixient array} \cite{TOCSpaper}. Provided with a method to access the compressed text, the suffixient array can efficiently search for one occurrence of a pattern in the text, or find all its maximal substrings that occur in the text. This is less than the classic pattern matching functionality of finding all the pattern occurrences, but sufficient for various applications while using less space.

Since smallest suffixient sets are meant to be used on large repetitive text collections, their efficient construction is key to their adoption. Cenzato et al.~\cite{CenzatoOP24,TOCSpaper} already proposed two efficient constructions, which work on top of the suffix array (SA), the Burrows-Wheeler Transform (BWT), and the longest common prefix array (LCP) of the reversed text. Their first algorithm \cite{CenzatoOP24}, performs one single pass over those structures (i.e., it is ``one-pass'') and takes time $O(n+\overline{r}\log\sigma)$, where $n$ is the length of the text, $\overline{r}$ is the $r$ value for the reverse text, and $\sigma$ is the alphabet size. Their second algorithm \cite{CenzatoOP24}, which we call LF, is linear-time but not one-pass, just as their third algorithm \cite{TOCSpaper}, which we call FM. FM is faster than LF but uses more space. Bonizzoni et al.~\cite{BonizzoniGR2026} recently proposed the first algorithm that is both linear-time and one-pass, which we call BGR, but no implementation is reported. Fujimaru et al.~\cite{fujimaru2026} present linear-time online algorithms that work directly on the text, processing it left-to-right or right-to-left. Fujimaru et al.~\cite{Fuji26} present another linear-time algorithm, which is however not one-pass nor implemented; it uses some structures built on the text and some on its reverse. They also present the first sublinear-time construction (also not implemented). Finally, K\"oppl and Kucherov \cite{koppl2026} gave a near-real time construction algorithm, without reporting an implementation. 

In this context, we present the first algorithm to build smallest suffixient sets that is (i) linear-time, (ii) one-pass, and (iii) implemented. See Table~\ref{tab:compar}. We empirically demonstrate its practicality by comparing it with the implemented alternatives: our algorithm outperforms previous algorithms, or matches the best ones, in both time and space. On passing, we implement BGR algorithm, which turns out to be practical but dominated by ours in both space and time.

\begin{table}[t]
\begin{center}
\begin{tabular}{l|c|c|c}
Algorithm & linear-time & one-pass & implemented \\
\hline
One-pass \cite{CenzatoOP24} & no & yes & yes \\
LF \cite{CenzatoOP24} & yes & no & yes \\
FM \cite{TOCSpaper} & yes & no & yes \\
BGR~\cite{BonizzoniGR2026} & yes & yes & \sout{no} yes \\
Online \cite{fujimaru2026} & yes & yes*$\!\!\!$ & yes \\
Linear \cite{Fuji26} & yes & no & no \\
Near-linear \cite{koppl2026} & no & yes*$\!\!\!$ & no \\
{\bf Ours} & yes & yes & yes \\
\hline
\end{tabular}
\end{center}
\caption{The existing algorithms to build a smallest suffixient set, and ours in this context. We implement algorithm BGR in this paper. Algorithms ``Online'' and ``Near-linear'' are one-pass over the text, not in the same sense as the others.}
\label{tab:compar}
\end{table}

\section{Preliminaries}\label{sec:preliminaries}
By $[i, j]$ we refer to all values $i, i + 1, \dots, j$, if $i \leq j$, and $[i, j] = \emptyset$, otherwise; and we define $[n] = [1, n]$. A string $T[1, n] \in \Sigma^n$ is a concatenation of $n$ elements of an alphabet $\Sigma = \{1, \dots, \sigma\}$. The length of $T$ is denoted by $\vert T \vert = n$. The empty string $\epsilon$ is the only string satisfying $\vert \epsilon \vert = 0$. $T[i]$ indicates the $i$-th element of $T$, and $T[i, j] = T[i] \dots T[j]$ is the substring of $T$ starting at position $i$ and ending at position $j$, if $1 \leq i \leq j \leq n$, and $T[i, j] = \epsilon$ otherwise. A prefix of $T$ is a substring of the form $T[1, j]$ and a suffix is a substring of the form $T[i, n]$. $\rev{T}$ denotes the reversal of $T$, that is, $\rev{T} = T[n] \dots T[1]$.

Given two strings $T, S \in \Sigma^+$, none a prefix of the other, and a total order over $\Sigma$, the \textit{lexicographic} order $<_{lex}$ over $\Sigma^+$ is defined as $T <_{lex} S$ if and only if $T[1,i] = S[1,i]$ and 
% esto complicaba las cosas si $S$ también se acababa $\vert T \vert = i$ or 
$T[i + 1] < S[i + 1]$, for some $i$.
The \textit{suffix array} of $T[1, n]$ ($\SA(T)$) is a permutation of the elements of $T$ such that $T[\SA(T)[i], n] <_{lex} T[\SA(T)[j], n]$ for any $1 \leq i < j \leq n$, namely $\SA(T)[i]$ is the $i$-th suffix of $T$ in lexicographic order. We assume $T$ ends with a special character $\$$, which is smaller than any other symbol in $\Sigma$ and only occurs in $T[n]$, so $\SA$ is well defined. The \textit{longest common prefix} array of $T$ ($\LCP(T)$) is defined as $\LCP(T)[i] = \ell$ if and only if $2 \leq i$ and $T[\SA(T)[i - 1], \SA(T)[i - 1] + \ell - 1] = T[\SA(T)[i], \SA(T)[i] + \ell - 1]$ and $T[\SA(T)[i - 1] + \ell] \neq T[\SA(T)[i] + \ell]$, that is, $\LCP(T)[i]$ is the length of the maximal prefix shared by the $(i - 1)$-th and the $i$-th suffixes of $T$ in the lexicographic order; we further define $\LCP(T)[1] = 0$. The \textit{Burrows-Wheeler Transform} array of $T$ ($\BWT(T)$) is defined as $\BWT(T)[i] = T[\SA(T)[i - 1]]$, if $2 \leq i$, and $\BWT(T)[1] = T[n]$, that is $\BWT(T)[i]$ is the symbol preceding the $i$-th sufix of $T$. By $\SAr$, $\LCPr$, and $\BWTr$ we refer to $\SA(\rev{T})$, $\LCP(\rev{T})$, and $\BWT(\rev{T})$, respectively, where in this case we assume $\rev{T} = \rev{(T[1,n-1])}\$$.

Given an integer array $\mathcal{I}$, the \textit{previous smaller value} array of $\mathcal{I}$ ($\PSV(\mathcal{I})$) is a lenght-$\vert \mathcal{I} \vert$ array defined as $\PSV(\mathcal{I})[i] = j$ if and only if $j = \max (\{k<i \mid \mathcal{I}[k] < \mathcal{I}[i] \} \cup \{0\})$, that is, $j$ is the largest index in $[1, i - 1]$ for which $\mathcal{I}$ is strictly smaller than $\mathcal{I}[i]$, if such value exists, and $j = 0$ otherwise. The \textit{next smaller value} array, $\NSV(\mathcal{I})$, is defined analogously for the smaller value with an index bigger than $i$ (or $|\mathcal{I}|+1$ if none exists). For brevity, by $\PSV$ and $\NSV$ we will refer to $\PSV(\LCPr)$ and $\NSV(\LCPr)$, respectively.

A run in $\BWTr$ is a maximal contiguous substring that repeats a single symbol.
Given $i \in [2, n]$ and $a, c \in \Sigma$, with $a \neq c$, we say that $i$ is a $c$-run break if $\BWTr[i - 1, i] = ac$ or $\BWTr[i - 1, i] = ca$. 
By $box(i) = [l_i, r_i]$ we refer to the maximal interval such that $i \in [l_i, r_i]$ and $\LCPr[i] \leq \LCPr[j]$, for any $j \in [l_i, r_i]$, in other words, $box(i) = [\PSV[i] + 1, \NSV[i] - 1]$. By $text(i) = n - \SAr[i] + 1$ we indicate the position in $T$ corresponding to $\BWTr[i]$. Similarly, by $bwt(j) = \SAr^{-1}[n + 1 - j]$ we denote the position in $\SAr$ of the symbol $T[j]$.

We say the substring $T[i, j]$ ($i - 1 \leq j$) is right-maximal if $j = n$ (so $T[i, j]$ is a suffix of $T$) or there exist $a, b \in \Sigma$ such that $a \neq b$ and both $T[i, j] \cdot a$ and $T[i, j] \cdot b$ occur in $T$. Note that, by definition, $\epsilon$ is always right-maximal.

\subsection{Suffixient sets}\label{sec:suffixient}

Suffixient sets were initially proposed by Depuydt et al.~\cite{Depuydt24}.

\begin{definition}[Suffixient set \cite{TOCSpaper,CenzatoOP24,Depuydt24}]\label{def:suffixient}
A set $S \subseteq [n]$ is  \emph{suffixient} for a string $T$ if, for every one-character right-extension $T[i,j]$ ($j\geq i$) of every right-maximal string $T[i,j-1]$, there exists $x\in S$ such that  $T[i,j]$ is a suffix of $T[1,x]$.
\end{definition}

They also proposed a compressed index based on a suffixient set of a text $T$ that can efficiently find one occurrence of any given pattern $P$ within $T$, or say that $P$ does not occur in $T$. The index can also find all the maximal exact matches (MEMs) of $P$. They show how to compute a suffixient set of size $O(\overline{r})$, where $\overline{r}$ is the number of runs in $\BWTr$.

Their index uses $O(\vert S \vert + g)$ words of space, where $g$ is the size of a grammar that gives random access to the input text and $S$ is the suffixient set they build. It follows that the smaller the suffixient set built, the smaller their index is.

Cenzato et al.~\cite{CenzatoOP24} followed up and studied methods to build suffixient sets of smallest cardinality. They introduced three algorithms to achieve this goal. The first of them (Algorithm~1) runs in quadratic-time and is conceptually simple. The second one (Algorithm~2) runs in time $O(n + \overline{r}\log\sigma)$. The algorithm is called \textit{one-pass} in the paper because it needs just one sequential pass over $\SAr$, $\LCPr$, and $\BWTr$, which enables \textit{prefix free parsing}, a technique which allows streaming those arrays in compressed space \cite{BoucherGKLMM19}. The third algorithm (Algorithm~4) runs in $O(n)$ time and uses the same three arrays. Since it relies on the $\LF$ mapping \cite{FerraginaM05}, it needs random access on $\LCPr$, so it does not run on top of \textrm{PFP}. Later \cite{TOCSpaper}, the same authors propose another linear-time algorithm (Algorithm 7), which uses $\PSV$/$\NSV$ arrays and, in practice, is the fastest between the four algorithms, at the cost of using more working space. In the same paper, the authors proposed an index called \textit{suffixient array}, a sampling of the positions of the \textit{prefix array} (the mirroring version of $\SA$, which uses lexicographic order over the reversed suffixes) which can be binary-searched and offers fast pattern matching, if random access to the text is still available. This index outperforms many well-known ones, such as the $r$-index \cite{GagieNP20}, the fastest index in practice offering full pattern matching (i.e., finding all the pattern occurrences, not just one).

Several other construction algorithms were proposed after the work of Cenzato et al.~\cite{CenzatoOP24,TOCSpaper}. Bonizzoni et al.~\cite{BonizzoniGR2026} proposed the first algorithm that is linear-time and one-pass. Fujimaru et al.~\cite{fujimaru2026} proposed an online algorithm that works directly on the text (so it avoids computing $\SAr$, $\LCPr$, and $\BWTr$), and later \cite{Fuji26} presented another linear-time algorithm that builds some structures on the text and on its reverse, but it is not linear-time nor one-pass. They also present the first sublinear-time construction, which relies on some complex data structures proposed by Kempa and Kociumaka \cite{sublinear_suffix_array_construction}. Finally, K\"oppl and Kucherov \cite{koppl2026} gave the first near-real time construction algorithm.

Suffixient sets have also become a field of study from other perspectives. Cenzato et al.\cite{cenzato2025testingsuffixientsets} showed how to determine in linear time if an arbitrary set is suffixient of smallest cardinality. The measure $\chi$ (the size of a smallest suffixient set) is also studied as a measure of repetitiveness, due to the combinatorial properties of suffixient sets \cite{NRUspire25.2}. It is known that $\chi = O(\overline{r})$, and also that $\chi = O(r)$ \cite{NRUspire25.2}, so $\chi = O(\min(\overline{r}, r))$.

\section{A Simple Linear-Time Construction Algorithm}

Cenzato et al.~\cite[Alg.~3]{TOCSpaper} give a simple algorithm to compute a suffixient set of smallest cardinality. They define a set $B_{i, c}$ that contains the positions in $box(i)$ that are $c$-run breaks in $\BWTr$. Then, for a $c$-run break $i$, with $\BWTr[i'] = c$ for some $i' \in \{i - 1, i\}$, the algorithm relies on adding to the resulting set the position $text(i')$ if and only if it holds \begin{equation}i = \max\{j\ :\ j\in B_{i, c} \wedge \LCPr[j] = \max \LCPr[B_{i, c}]\}.\label{eq:rightmost-maximum}\end{equation} That algorithm spends $O(n)$ time to determine if Eq.~(\ref{eq:rightmost-maximum}) holds, and, since boxes may overlap, it spends quadratic time to run the whole algorithm. In this section we present a method to evaluate Eq.~(\ref{eq:rightmost-maximum}) in $O(1)$ time, which leads to a linear-time version of the algorithm. As a first step, we introduce the notion of \textit{last c-candidate} positions (Definition~\ref{def:last-candidate}). Later, we prove that \textit{last c-candidate} positions are exactly the positions that satisfy Eq.~\ref{eq:rightmost-maximum} (Proposition~\ref{prop:last-candidate}). Finally, we show a modified version of the quadratic algorithm using this equivalence (Algorithm~\ref{alg:lc-algo}).

In the rest of this section, we assume there are exactly $k$ $c$-run breaks in $\BWTr$. Let $1 \leq i_1 < i_2 < \dots < i_k \leq n$ be the positions of those run-breaks, and let $box(i_h) = [l_h, r_h]$, for each $h \in [k]$.

\begin{definition}[\textit{last c-candidate}]\label{def:last-candidate}
Let $i_h$ be the position of a $c$-run break. We say $i_h$ is a \textit{c-candidate} if one of the following conditions holds: 

\begin{itemize}
    \item{$h = 1$,}
    \item{$i_{h - 1} < l_h$, or}
    \item{$i_h \le r_{h - 1}$ and $i_{h - 1}$ is a $c$-\textit{candidate}.}
\end{itemize}

Moreover, we say that $i_h$ is a \textit{last c-candidate} if it is a \textit{c-candidate} and one of the following conditions hold:

\begin{itemize}
    \item{$h = k$, or}
    \item{$r_h < i_{h + 1}$.}
\end{itemize}
\end{definition}

Intuitively, a position is a \textit{c-candidate} if it is a $c$-run break and its $\LCPr$ value is equal to the $\LCPr$ value of every previous $c$-run break (if any) within its box. On the other hand, a \textit{c-candidate} position is a \textit{last c-candidate} if there are no larger \textit{c-candidates} positions within its box. The following proposition establishes the equivalence between finding $c$-run break positions that are the rightmost maximum within a box and \textit{last c-candidate} positions.

\begin{proposition}\label{prop:last-candidate}
Let $i_h$ be a $c$-run break. We have $$i_h = \max\{i\ :\ i\in B_{i_h, c} \wedge \LCPr[i] = \max \LCPr[B_{i_h, c}]\}$$ if and only if $i_h$ is a \textit{last c-candidate}.
\end{proposition}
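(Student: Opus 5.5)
The plan is to first reduce the rightmost-maximum condition of Eq.~(\ref{eq:rightmost-maximum}) to two simpler conditions on $box(i_h)$, and then match these with the two halves of Definition~\ref{def:last-candidate}. The basic fact I will use throughout is the defining property of boxes: $box(i_h)=[l_h,r_h]=[\PSV[i_h]+1,\NSV[i_h]-1]$ is the maximal interval around $i_h$ on which $\LCPr\ge \LCPr[i_h]$. Hence every $j\in B_{i_h,c}$ satisfies $\LCPr[j]\ge \LCPr[i_h]$. Since $i_h\in B_{i_h,c}$, the maximum of $\LCPr$ over $B_{i_h,c}$ equals $\LCPr[i_h]$ exactly when all breaks in the box have the same value as $i_h$. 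Therefore $i_h$ is the rightmost maximum iff (a) every $c$-run break $i_g\in box(i_h)$ has $\LCPr[i_g]=\LCPr[i_h]$, and (b) $i_h$ is the rightmost $c$-run break in $box(i_h)$.

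Condition (b) is immediate: since $i_{h+1}>i_h\ge l_h$, it says exactly that $h=k$ or $r_h<i_{h+1}$, which is the ``last'' part of the definition. Under (b), condition (a) only concerns breaks $i_g$ with $g<h$. So it remains to prove the claim $(\ast)$: $i_h$ is a $c$-candidate iff every $i_g\in box(i_h)$ with $g<h$ has $\LCPr[i_g]=\LCPr[i_h]$. I will prove $(\ast)$ by induction on $h$.

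The key auxiliary lemma is the following. Suppose $i_{h-1}\in box(i_h)$, so $\LCPr[i_{h-1}]\ge\LCPr[i_h]$. Then $i_h\le r_{h-1}$ iff $\LCPr[i_{h-1}]=\LCPr[i_h]$, and in that case $box(i_{h-1})=box(i_h)$.
\begin{itemize}
\item If $i_h\le r_{h-1}$, then $\LCPr[i_h]\ge\LCPr[i_{h-1}]$, which gives equality.
\item Conversely, if the values are equal, every position in $[i_{h-1},i_h]$ lies in $box(i_h)$. These positions therefore have value at least $\LCPr[i_h]=\LCPr[i_{h-1}]$, so $i_h\le r_{h-1}$.
\item For the equality of boxes, two positions with equal $\LCPr$ value that lie in each other's box have the same maximal interval of values at least that common value.
\end{itemize}

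Now the induction. For $h=1$ the condition in $(\ast)$ is vacuous, and $i_1$ is a candidate by definition. For $h>1$ there are two cases.
\begin{itemize}
\item If $i_{h-1}<l_h$, then no earlier break lies in $box(i_h)$. The condition is again vacuous, and $i_h$ is a candidate by the second bullet.
\item Otherwise $i_{h-1}\in box(i_h)$, and the second bullet fails. If $\LCPr[i_{h-1}]>\LCPr[i_h]$, the condition in $(\ast)$ fails. By the lemma $i_h>r_{h-1}$, so the third bullet also fails and $i_h$ is not a candidate. If the values are equal, the lemma gives $i_h\le r_{h-1}$ and $box(i_{h-1})=box(i_h)$. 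The earlier breaks in $box(i_h)$ are then $i_{h-1}$ together with those in $box(i_{h-1})$ before it. By the induction hypothesis, all of them have value $\LCPr[i_{h-1}]=\LCPr[i_h]$ iff $i_{h-1}$ is a candidate, which is exactly the third bullet.
\end{itemize}

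The main obstacle I expect is being careful with the strict inequality in $\PSV/\NSV$. Boxes extend over positions with values $\ge\LCPr[i]$, which is what makes the lemma (including the equality of boxes) go through cleanly. I also need to handle the boundary cases $h=1$ and $h=k$ without off-by-one errors.
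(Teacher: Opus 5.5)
Your proof is correct, and it is organized differently from the paper's. The paper never states a characterization of $c$-candidates. It proves the two implications separately, starting from the chain $i_{h-a-1}<l_h\le i_{h-a}<\dots<i_{h+b}\le r_h<i_{h+b+1}$ that describes the breaks in $box(i_h)$. For ($\Rightarrow$), it notes that all values in $B_{i_h,c}$ coincide, so $b=0$ and all these breaks share one box. Then $i_{h-a}$ is a candidate by the first or second bullet, and candidacy is pushed forward through the third bullet up to $i_h$. For ($\Leftarrow$), it argues by contrapositive on the index $m$ of the rightmost maximum. If $m<h$, it locates the last strict descent $\LCPr[i_{h-e}]>\LCPr[i_{h-e+1}]=\dots=\LCPr[i_h]$ and pushes non-candidacy forward from $i_{h-e+1}$; if $m>h$, it observes that $i_{h+1}\le r_h$.

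You instead split the rightmost-maximum condition into (a) and (b), and match (b) directly with the ``last'' clause. You then promote the paper's informal remark after Definition~\ref{def:last-candidate} to the lemma $(\ast)$: a candidate is a break whose $\LCPr$ value equals that of every earlier $c$-run break in its box. A single induction proves $(\ast)$ and yields both implications at once. Your consecutive-break lemma ($i_h\le r_{h-1}$ iff the values are equal, in which case the boxes coincide) is exactly the fact the paper uses implicitly at each propagation step.

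Your route is more modular, and it gives the flag $R[c].candidate$ in Algorithms~\ref{alg:plain-lc-algo} and~\ref{alg:lc-algo} a direct, non-recursive meaning. The paper's route is more concrete: it pinpoints the break at which candidacy is lost, without a separately proved lemma.
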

\begin{proof}
Let $B_{i_h, c} = \{i_{h - a}, \dots, i_h, \dots, i_{h + b}\}$, $box(i_h) = [l_h, r_h]$, and define $i_0 := 0$ and $i_{k + 1} := n + 1$.
By definition of $B_{i_h, c}$, for every $j \in [h - a, h + b]$ it holds $\LCPr[i_h] \leq \LCPr[i_j]$. In addition, it holds 
\begin{equation}\label{eq:lc-boxes}i_{h - a - 1}  < l_h \leq i_{h - a} < \dots < i_h < \dots < i_{h + b} \leq r_h < i_{h + b + 1}
\end{equation} 

\noindent($\Rightarrow$)
Suppose $i_h = \max\{i\ :\ i \in B_{i_h, c} \wedge \LCPr[i] = \max \LCPr[B_{i_h, c}]\}$. By the \textit{maximality} of $\LCPr[i_h]$, for every $i_j \in B_{i_h, c}$ it holds $\LCPr[i_h] = \LCPr[i_j]$. This implies $b = 0$ and $box(i_{h - a}) = \dots = box(i_h)$. Since by Eq.~(\ref{eq:lc-boxes}) it holds $i_{h - a - 1} < l_{h - a} \leq i_{h - a}$, $i_{h - a}$ is a \textit{c-candidate}. In addition, by Eq.~(\ref{eq:lc-boxes}), for $d = 1, \dots, a$  we have $i_{h - a + d} < r_{h - a + d - 1}$, and by the equality of the boxes we have $r_{h - a + d - 1} = r_h$, then $i_{h - a + d}$ is a \textit{c-candidate}, so $i_h$ is a \textit{c-candidate}. Finally, by Eq.~(\ref{eq:lc-boxes}) it holds $r_h < i_{h + 1}$, therefore, $i_h$ is a \textit{last c-candidate}.

\noindent($\Leftarrow$) We proceed by contrapositive. Suppose $i_h \neq i_m = \max\{i\ :\ i\in B_{i_h, c} \wedge \LCPr[i] = \max \LCPr[B_{i_h, c}]\}$. If $m < h$, since $i_m \in B_{i_h, c}$, there exists $e \in [h - m]$ such that $\LCPr[i_{h - e}] > \LCPr[i_{h - e + 1}] = \dots = \LCPr[i_{h}]$. Because $\LCPr[i_{h - e }] > \LCPr[i_{h - e + 1}]$, we have $l_{h - e + 1} \le i_{h - e}$ and $r_{h - e} < i_{h - e + 1}$, so $i_{h - e + 1}$ is not a \textit{c-candidate}. Now, for $d = 1, \dots, e$ we have $l_{h - e + d} \le i_{h - e + d - 1}$ and $i_{h - e + d - 1}$ is not a \textit{c-candidate}, then $i_{h - e + d}$ is not a \textit{c-candidate}, which implies $i_h$ is not a \textit{c-candidate}. On the other hand, if $h < m \leq k$, then $1 \leq b$ and $i_{h + 1} \in B_{i_h, c}$, so we have $\LCPr[i_h] \leq \LCPr[i_{h + 1}]$, which means $i_{h + 1} \le r_h$. Therefore, $i_h$ cannot be a \textit{last c-candidate}.\qed
\end{proof}

From Proposition~\ref{prop:last-candidate} we can develop a particular strategy to find local maxima inside \textit{boxes} by finding \textit{last candidate}s within them. Actually, since the \textit{candidate} condition propagates forward, a single sequential scan of the $\BWTr$ suffices to identify \textit{last candidate} positions, given that we know boxes' boundaries. As mentioned in Section~\ref{sec:preliminaries}, we can compute $box(i) = [l_i, r_i] = [\PSV[i] + 1, \NSV[i] - 1]$ in $O(1)$ time. 

Following the above ideas, we show a \textit{linearization} of the quadratic algorithm in Algorithm~\ref{alg:plain-lc-algo}. We call this algorithm \textit{plain last-candidate} (\textrm{PLC}) algorithm (the reason for ``plain'' will be made clear later). It uses a data structure $R$ to save information about the \textit{candidate} condition on the last $c$-run break seen up to now. The algorithm sequentially scans the $\BWTr$ looking for run-breaks. For a given $c$-run break $i$, where $\BWTr[i'] = c$ for some $i' \in \{i -1, i\}$, $R[c]$ stores $(sa\_pos \gets i, text\_pos \gets n - \SAr[i'] + 1, candidate \in \{true, false\})$, where $candidate$ is set to $true$ if and only if position $i$ is a \textit{c-candidate}.

Algorithm~\ref{alg:plain-lc-algo} shows the \textrm{PLC} algorithm.
We next prove its correctness. 

\begin{algorithm}[t]
% LC stands for "Last Candidate". Very unoriginal
\caption{PLC-Algorithm}\label{alg:plain-lc-algo}
\Input{A string $T[1,n]$ over a finite alphabet $\Sigma$.}
\Output{A smallest suffixient set for $T$.}
{$\mathcal{S} \gets \emptyset$}\;
{$\BWTr\ \gets \textrm{BWT}(\rev{T})$;
$\LCPr\ \gets \textrm{LCP}(\rev{T})$;
$\SAr\ \gets \textrm{SA}(\rev{T})$\label{line:plc.compute arrays}\;
$\NSV \gets \textrm{NSV}(\textrm{LCP}) $;
$\PSV \gets \textrm{PSV}(\textrm{LCP}) \label{line:plc.compute-sv}$\;
}\label{line:lc.compute arrays sv}
$R[1,\sigma] \gets (sa\_pos \gets 1, text\_pos \gets 0, candidate \gets true) \times \sigma$\;\label{line:plc.init}
\For{$i = 2, \dots, n$}{
\If{$\BWTr[i] \neq \BWTr[i - 1]$}{
\For{$i'\in \{i - 1, i\}$}{
$candidate \gets false $; $c \gets \BWTr[i']$\;\label{line:plc non candidate}
\If{$\NSV[R[c].sa\_pos] \leq i \land R[c].candidate = true$ \label{line:plc.check lc}}{
$\SSS \gets \SSS \cup \{R[c].text\_pos\}$\;
}
\Else{
$candidate \gets R[c].candidate$\;
}
\If{$R[c].sa\_pos \leq \PSV[i]$ \label{line:plc.check candidate}}{
$candidate \gets true$\;\label{line:plc.candidate true}
}
$R[c] \gets (i, n - \SAr[i'] + 1, candidate)$\;\label{line:plc.update}
}}}
\ForEach{$c \in \Sigma$\label{line:plc.eval}}
{
\lIf{$R[c].candidate = true$}{$\SSS \gets \SSS \cup \{R[c].text\_pos\}$}
}
\Return $\SSS$\;
\end{algorithm}

\begin{lemma}
Given $T \in \Sigma^n$, Algorithm~\ref{alg:plain-lc-algo} computes a suffixient set of smallest cardinality $\SSS$ in $O(n)$ time and $O(n)$ space.
\end{lemma}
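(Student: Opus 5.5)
The plan is to reduce correctness to the quadratic algorithm of Cenzato et al.~\cite[Alg.~3]{TOCSpaper}, which we assume outputs a smallest suffixient set by adding $text(i')$ for exactly those $c$-run breaks $i$ satisfying Eq.~(\ref{eq:rightmost-maximum}). By Proposition~\ref{prop:last-candidate}, these are exactly the last $c$-candidates. So it suffices to show that Algorithm~\ref{alg:plain-lc-algo} inserts $R[c].text\_pos$ into $\SSS$ exactly when the stored $c$-run break is a last $c$-candidate, and that the stored text position is $text(i')$ with $\BWTr[i']=c$.

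First I would fix a symbol $c$ and view the execution restricted to the updates of $R[c]$. These updates occur precisely at the $c$-run breaks $i_1<\dots<i_k$, in increasing order, since a $c$-run break $i$ is exactly a position with $\BWTr[i]\neq\BWTr[i-1]$ and $c\in\{\BWTr[i-1],\BWTr[i]\}$. At most one $i'$ per $i$ has $\BWTr[i']=c$, so $R[c]$ is updated once per run break. I would prove by induction on $h$ the invariant that, after processing $i_h$, $R[c]=(i_h,\,text(i'),\,[i_h \text{ is a } c\text{-candidate}])$. The initialisation $R[c].sa\_pos=1$, $candidate=true$ acts as a sentinel, so that $h=1$ is handled uniformly. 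This requires $\PSV[i_1]\ge 1$ or a separate check, and that boundary case is where I would be careful: in $\LCPr$ we have $\LCPr[1]=0$ and the entry for $\$$ is minimal, so the sentinel must be verified to give $candidate=true$ at $h=1$. For the inductive step I would translate the tests. The condition $R[c].sa\_pos\le\PSV[i_h]$ means $i_{h-1}<l_h$. The condition $\NSV[i_{h-1}]\le i_h$ means $r_{h-1}<i_h$, and its negation means $i_h\le r_{h-1}$. The assignments in lines~\ref{line:plc non candidate}--\ref{line:plc.candidate true} then compute exactly the disjunction in Definition~\ref{def:last-candidate}.

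Next I would show that the output is correct. When processing $i_{h+1}$, the first test fires exactly when $i_h$ is a candidate and $r_h<i_{h+1}$, i.e.\ when $i_h$ is a last $c$-candidate with $h<k$. The final loop of line~\ref{line:plc.eval} adds $i_k$ exactly when it is a candidate, which is the case $h=k$. I must also check that the sentinel never causes a spurious insertion. This holds because $\NSV[1]$ exceeds any $i$, as $\LCPr[1]=0$ is minimal. I would also confirm that the ``else'' branch is consistent with the case where the first test fails only because the candidate flag is false. Finally, since each position $text(i')$ is inserted at most once and distinct last candidates give distinct positions, the cardinality equals that of the set produced by the quadratic algorithm, which is therefore smallest.

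For complexity, $\SAr$, $\LCPr$, and $\BWTr$ are built in linear time by standard suffix-array construction. $\PSV$ and $\NSV$ are built in $O(n)$ with a stack, and the main loop does $O(1)$ work per index. The array $R$ takes $O(\sigma)$ space, which is $O(n)$ under the usual assumption $\sigma\le n$ (or after remapping the alphabet to the symbols that occur). I expect the main obstacle to be the careful handling of the sentinel and boundary cases ($h=1$, $\NSV$ at position $1$, and the meaning of $r_h$ when $\NSV=n+1$), rather than the inductive correspondence itself.
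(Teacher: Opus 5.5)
Your proposal is correct and follows essentially the same route as the paper. It runs an induction over the $c$-run breaks showing that $R[c]$ stores $i_h$, $text(i')$ and the $c$-candidate flag, translates the two tests into $r_{h-1}<i_h$ and $i_{h-1}<l_h$, concludes via Proposition~\ref{prop:last-candidate} and the quadratic algorithm, and uses the same $O(n)$ accounting.

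The boundary case you flagged does resolve. $\PSV[i_1]=0$ whenever $\LCPr[i_1]=0$, so the $\PSV$ test can fail at $h=1$; the paper's proof misstates this when it claims that test always succeeds. However, since $\NSV[1]=n+1$, the else branch copies the sentinel's $true$, so the sentinel behaves as a virtual candidate $i_0=1$ with box $[1,n]$. Separately, your remapping to the occurring symbols is also what keeps the untouched sentinel value $text\_pos=0$ out of $\SSS$ in the final loop.
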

\begin{proof}
For each $c \in \Sigma$, $R[c]$ is set to $(1, 0, true)$ at the beginning of the algorithm (line~\ref{line:plc.init}). Then, the first time we find a $c$-run break, the condition of line~\ref{line:plc.check lc} is never satisfied (because for every $c$-run-break we initially set $R[c].sa\_pos = 1$ and $\NSV[1] = n + 1$), but the condition of line~\ref{line:plc.check candidate} is true, so $R[c]$ is updated and set as a \textit{c-candidate} (because the first $c$-run break is always a \textit{c-candidate}).
Now, assume that we have scanned the $\BWTr$ until position $i - 1$ and $i$ is a $c$-run break, with $\BWTr[i'] = c$ for some $i' \in \{i - 1, i\}$. Also, assume that $j < i$ is the previous $c$-run break, $j' \in \{j - 1, j\}$ and $\BWTr[j'] = c$. Then $R[c].sa\_pos = j$, $R[c].text\_pos = n - \SAr[j'] + 1$. If $R[c].candidate = true$ and $\NSV[R[c].sa\_pos] \leq i$ (line~\ref{line:plc.check lc}), then $j$ is a \textit{last c-candidate} and we add $R[c].text\_pos$ to the suffixient set we have built up to now. On the other hand, if $R[c].sa\_pos \leq \PSV[i]$ or $j$ is a \textit{c-candidate} and $i < \NSV[R[c].sa\_pos]$ (line~\ref{line:plc.check candidate}) we set $i$ as a \textit{c-candidate} (lines~\ref{line:plc.candidate true} and \ref{line:plc.update}), otherwise it is set as a non-$\textit{c-candidate}$ (lines~\ref{line:plc non candidate} and \ref{line:plc.update}). Now, we update the remaining attributes of $R[c]$ according to $i$'s values: $R[c].sa\_pos = i$, $R[c].text\_pos = n - \SAr[i'] + 1$. Finally, if $i$ is the last $c$-run break, it is added to the output suffixient set in the final loop if it is a \textit{c-candidate} (line \ref{line:plc.eval}). Therefore, at the end of the algorithm execution, the output set $\SSS$ consists of all (and only the) \textit{last candidate} positions. Then, Algorithm~\ref{alg:plain-lc-algo} correctness follows from correctness of the quadratic algorithm \cite{CenzatoOP24,Depuydt24}, and Proposition \ref{prop:last-candidate}.

The arrays $\BWTr$, $\LCPr$, $\SAr$, $\NSV$, and $\PSV$ consume $O(n)$ space. The data structure $R$ uses $O(\sigma)$ space and the set $\SSS$ is also of size $O(n)$. Thus, the space usage of Algorithm~\ref{alg:plain-lc-algo} is $O(n)$. With regard to running time, computing $\BWTr$, $\LCPr$, $\SAr$  in line \ref{line:plc.compute arrays} takes $O(n)$ time. $\PSV$ and $\NSV$ in line \ref{line:plc.compute-sv} can also be computed in $O(n)$ time \cite{BerkmanSV93}. Scanning $\BWTr$ takes $O(n)$ time, and we perform $O(1)$ operations for each $\BWTr$ position. The total running time is then bounded by $O(n)$.\qed
\end{proof}

Cenzato et al.~\cite{TOCSpaper} 
give two linear-time algorithms to compute a smallest suffixient set. One is based on the $\LF$ array (so we refer to it as $\LF$ algorithm) and the other is based on $\PSV/\NSV$ arrays (and we call it \textrm{FM} algorithm). As pointed out in the paper, the \textrm{FM} algorithm is faster than the \textrm{LF} algorithm, at the cost of increasing space usage. As shown in Table~\ref{fig:results}, the $\textrm{PLC}$ algorithm is neither faster than the \textrm{FM} algorithm nor uses less space. Interestingly, with a slight modification, we can turn it as fast as the \textrm{FM} algorithm using as little space as the \textrm{LF} algorithm, getting the best performance of both. We refer to the new algorithm as \textrm{LC}.

By the form in which Algorithm~\ref{alg:plain-lc-algo} works, we can see that it never asks for $\PSV$/$\NSV$ values at positions that are not run-breaks. So, there is no need to compute $\PSV$/$\NSV$ for other positions. Further, we can compute the needed values \textit{on the fly} as we find the run-breaks, using a stack-based strategy similar to the one of Bonizzoni et al.~\cite{BonizzoniGR2026}. In particular, we use two stacks, $psv$ and $nsv$. In $psv$, we push each $i$ as we iterate from $i = 2$ to $i = n$, that is, every $i$ can be the $\PSV$ value for some run-break. Then, for each run-break $i$ we pop elements from $psv$ until $\LCPr[psv.top()] < \LCPr[i]$ or $psv$ is empty, so $\PSV[i] = psv.top()$ or $\PSV[i] = 1$, respectively. Since we compute each needed $\PSV$ value when it is requested, we store it in a variable $psv_i$, so there is no need to create the array $\PSV$, which reduces running time and space usage. On the other hand, in $nsv$ we push a $c$-run break $j$ to mean it is still pending to find its $\NSV$ value, which will not be required until the next occurrence of a $c$-run break. At the beginning of iteration $i$, while $\LCPr[i] < \LCPr[nsv.top()]$ we set $R[\BWTr[nsv.top() - 1]].nsv = R[\BWTr[nsv.top()]].nsv = i$ (because we had defined $R[\BWTr[j - 1]].sa\_pos = R[\BWTr[j]].sa\_pos = j$ when we processed the run-break $j$) and call $nsv.pop()$ (because we already found $nsv.top()$ $\NSV$ value). Since we store $\NSV$ values only for the run-breaks, we can store those values in the data structure $R$, adding a new field $R[c].nsv$, which is set $R[c].nsv = n + 1$ as a default value when we update $R[c]$ on line \ref{line:plc.update} until we compute its real value (which can be also $n + 1$). Then there is also no need to create an array $\NSV$, which again reduces running time and space usage. Algorithm~\ref{alg:lc-algo} applies these modifications to Algorithm~\ref{alg:plain-lc-algo}.

\begin{algorithm}[t]
% LC stands for "Last Candidate".
\caption{LC-Algorithm}\label{alg:lc-algo}
\Input{A string $T[1,n]$ over a finite alphabet $\Sigma$.}
\Output{A smallest suffixient set for $T$.}
{$\mathcal{S} \gets \emptyset$}\;
{$\BWTr\ \gets \textrm{BWT}(\rev{T})$;
$\LCPr\ \gets \textrm{LCP}(\rev{T})$;
$\SAr\ \gets \textrm{SA}(\rev{T})$ \; 
}
$R[1,\sigma] \gets (sa\_pos \gets 1, text\_pos \gets 0, candidate \gets true, nsv \gets n + 1) \times \sigma$\;
$psv \gets new\_stack()$; $nsv \gets new\_stack()$; $psv.push(1)$; $nsv.push(1)$\;
\For{$i = 2, \dots, n$}{
\While{$nsv$ is not empty and $\LCPr[i] < \LCPr[nsv.top()]$}{
\lFor{$j \in \{nsv.top() - 1, nsv.top()\}$}{
$R[\BWTr[j]].nsv \gets i$}
$nsv.pop()$\;
}
\If{$\BWTr[i] \neq \BWTr[i - 1]$}{
$nsv.push(i)$\;
\lWhile{$psv$ is not empty and $\LCPr[i] \leq \LCPr[psv.top()]$}{$psv.pop()$}
\lIf{$psv$ is not empty}{$psv_i \gets psv.top()$ \textbf{else} $psv_i \gets 1$}
\For{$i'\in \{i - 1, i\}$}{
$candidate \gets false $; $c \gets \BWTr[i']$\;%\label{line:lc non candidate}
\If{$R[c].nsv \leq i \land R[c].candidate = true$}{
$\SSS \gets \SSS \cup \{R[c].text\_pos\}$\;
}
\Else{
$candidate \gets R[c].candidate$\;
}
\If{$R[c].sa\_pos \leq psv_i$}{
$candidate \gets true$\;
}
$R[c] \gets (i, n - \SAr[i'] + 1, candidate, n + 1)$\;
}}
$psv.push(i)$\;
}
\ForEach{$c \in \Sigma$}
{
\lIf{$R[c].candidate = true$}{$\SSS \gets \SSS \cup \{R[c].text\_pos\}$}
}
\Return $\SSS$\;
\end{algorithm}

\section{Experimental Results}

In this section we empirically compare our new algorithm with previous ones that are implemented. We also implement and compare BGR algorithm \cite{BonizzoniGR2026}.

The experiments were run using a workstation Intel(R) Pentium(R) IV, CPU @ 2.4 GHz with 8 cores, 10 MB cache, 256 gigabytes of RAM, 860 gigabytes of disk running Devuan GNU/LINUX 2.1. We compared the running time and the space consumption of algorithm \textrm{LC} against \textrm{PLC}, \textrm{FM}, and \textrm{LF}, computing smallest suffixient set over DNA sequences from the Pizza\&Chili collection\footnote{\url{https://pizzachili.dcc.uchile.cl}}. All the implementations of those algorithms are publicly available at {\tt \url{https://github.com/regindex/suffixient}}. We used the \textit{elapsed time} and the \textit{maximum resident set size} metrics returned by the utility \verb|usr/bin/time|\footnote{\url{https://www.gnu.org/software/time/}} as the running-time and the space consumption of the algorithms, respectively. We include in the construction cost the time to build the suffix array and other structures from the text. We also measure the performance of the algorithm of Bonizzoni et al. \cite{BonizzoniGR2026} implemented by us (called BGR, the implementation is available upon request) and the online algorithm \cite{fujimaru2026} (called Onl, also available upon request), the latter as a way to compare a different paradigm to compute suffixient sets. 

\begin{figure}[p]
    \centering
    \begin{minipage}{0.49\textwidth}
        \centering
        \includegraphics[width=\textwidth]{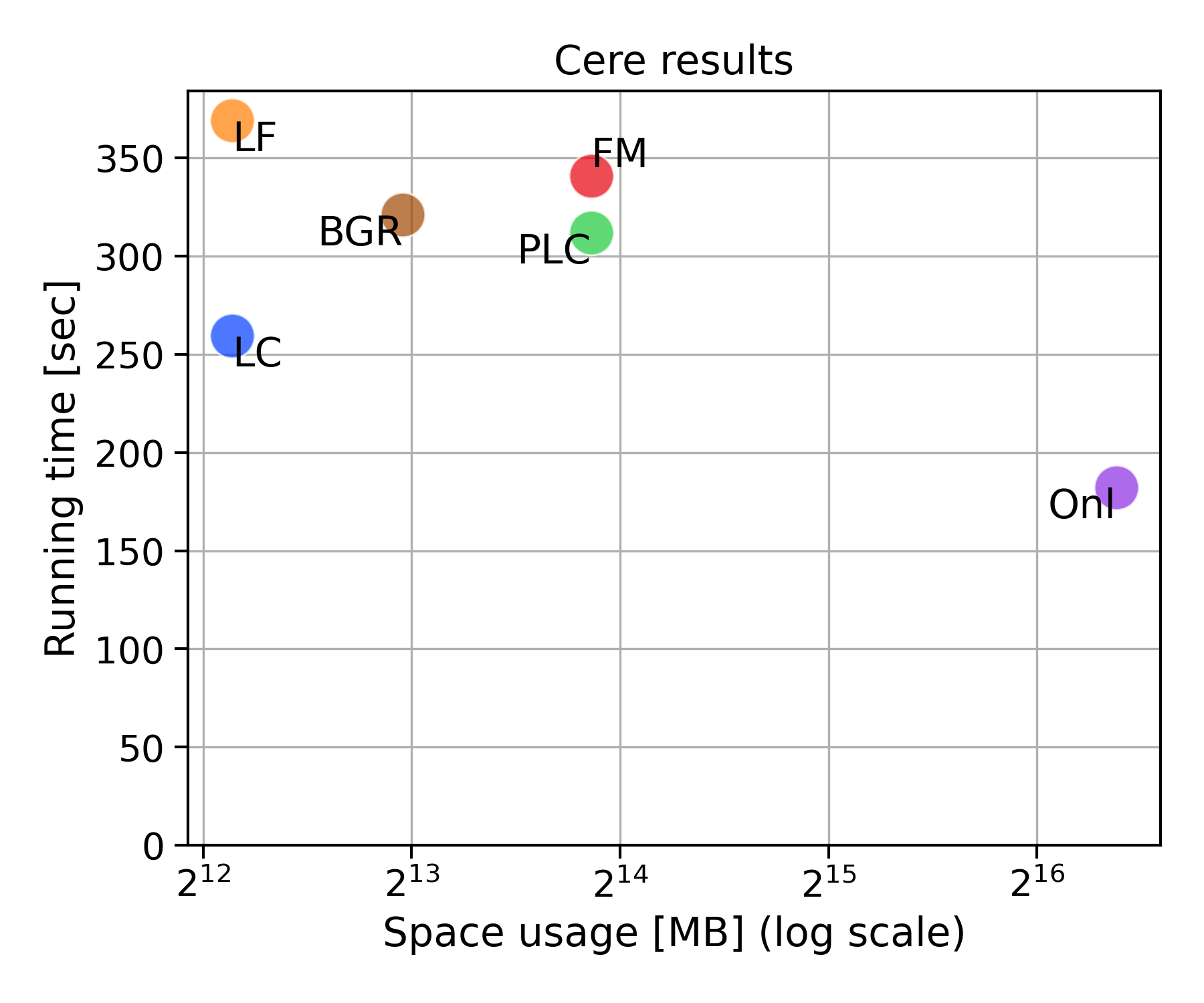} 
    \end{minipage}
    \begin{minipage}{0.49\textwidth}
        \centering
        \includegraphics[width=\textwidth]{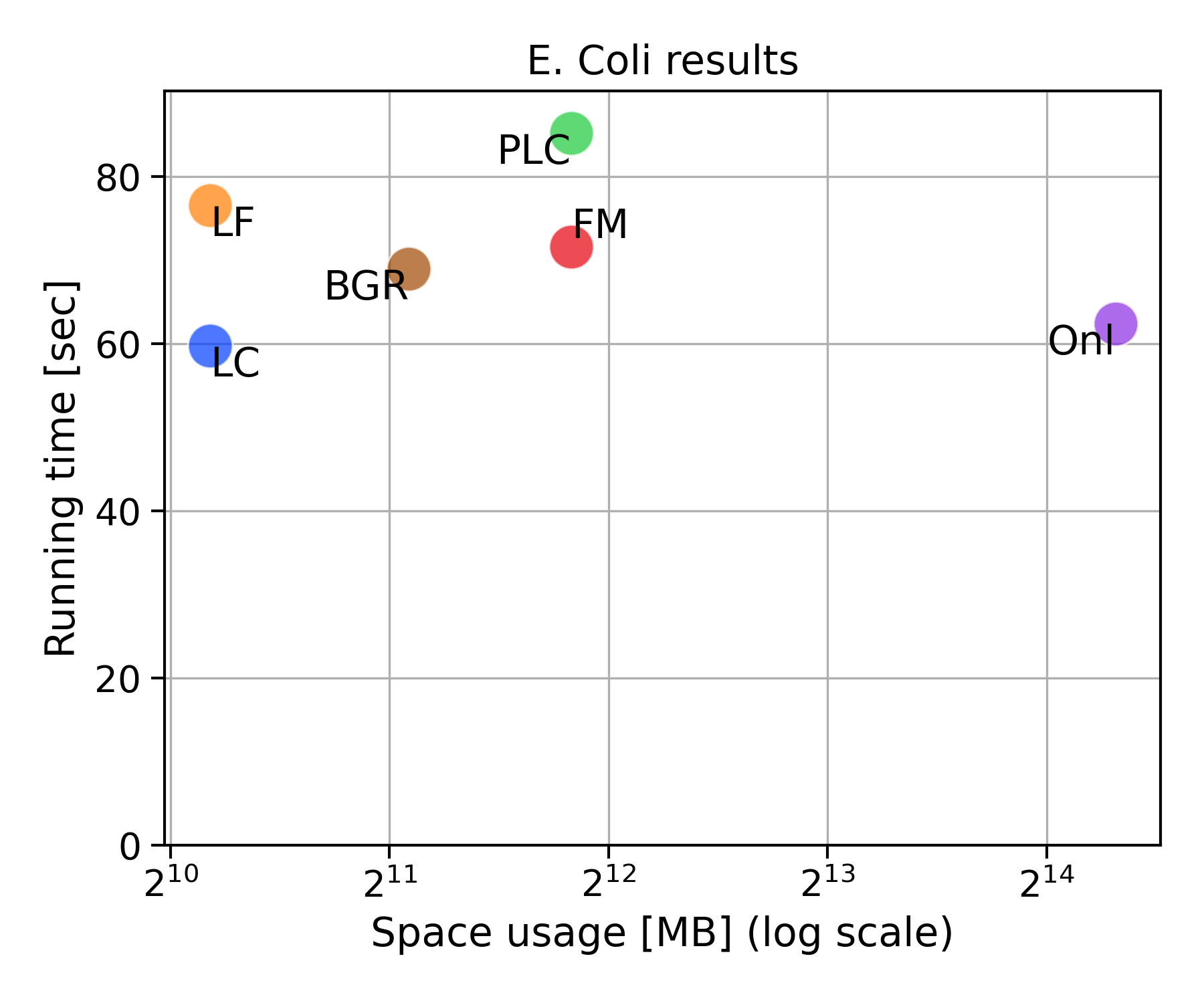} 
    \end{minipage}
    \begin{minipage}{0.49\textwidth}
        \centering
        \includegraphics[width=\textwidth]{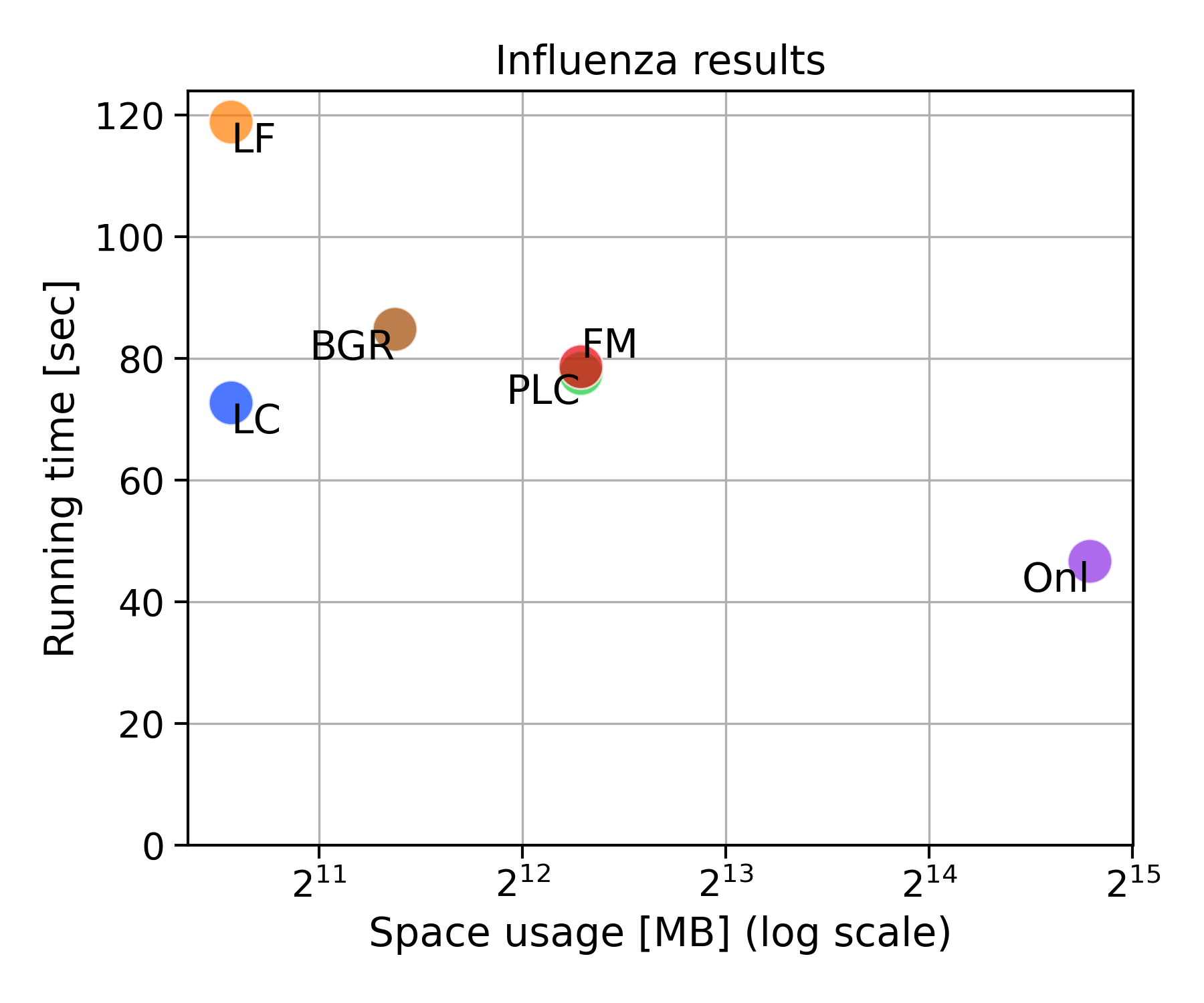} 
    \end{minipage}
    \begin{minipage}{0.49\textwidth}
        \centering
        \includegraphics[width=\textwidth]{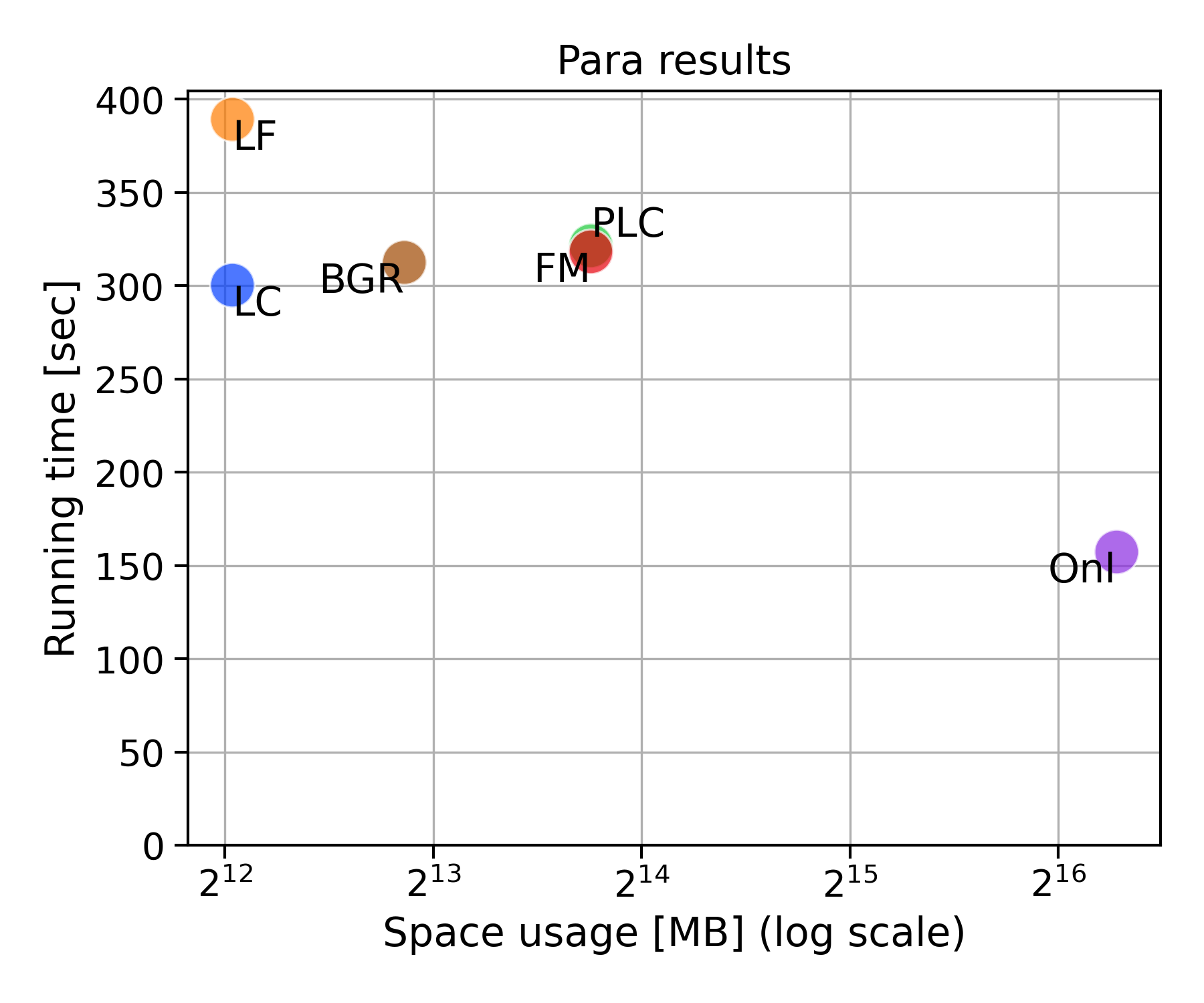} 
    \end{minipage}
    \begin{minipage}{0.49\textwidth}
        \centering
        \includegraphics[width=\textwidth]{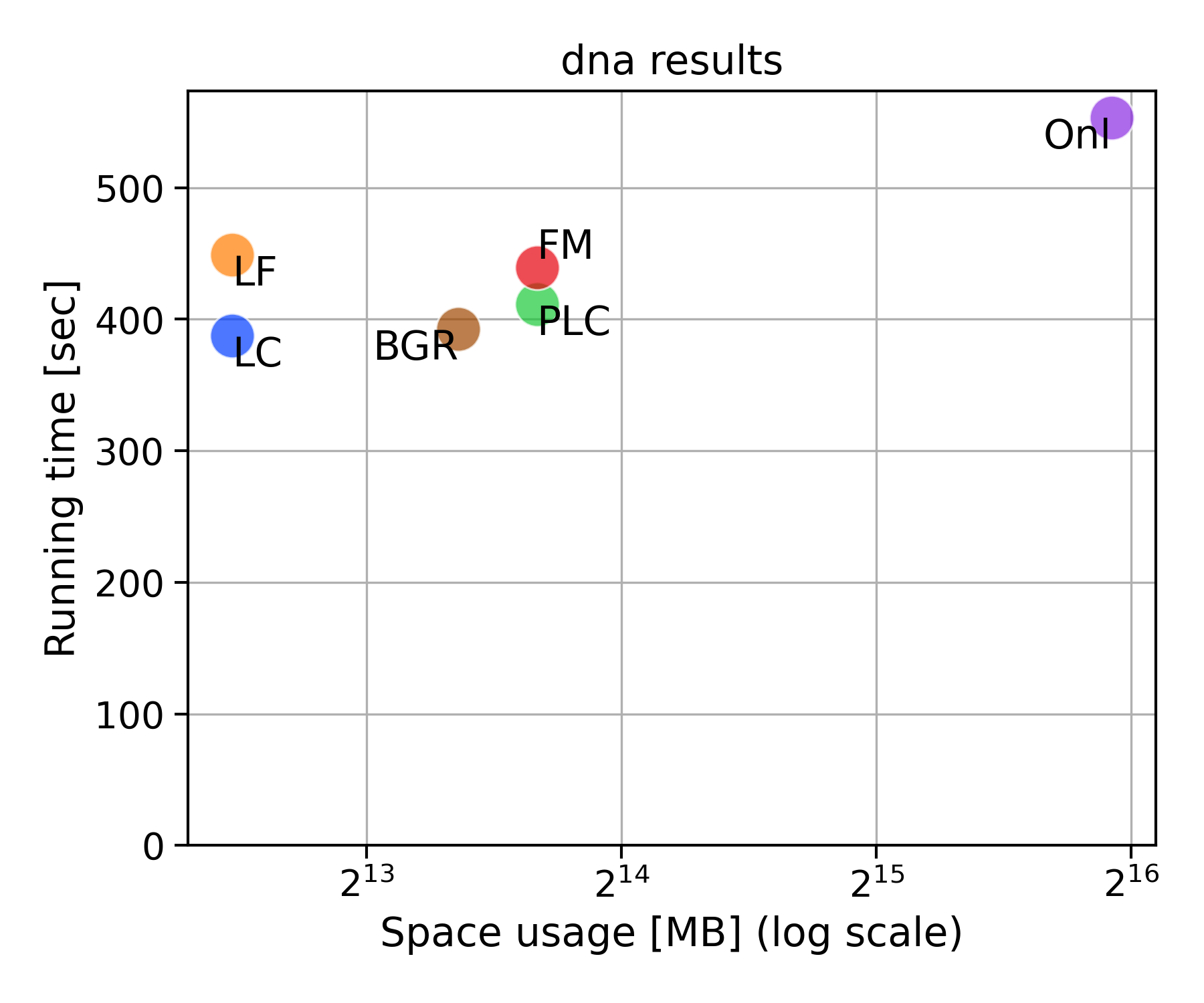}
    \end{minipage}
    \begin{minipage}{0.49\textwidth}
        \centering
        \includegraphics[width=\textwidth]{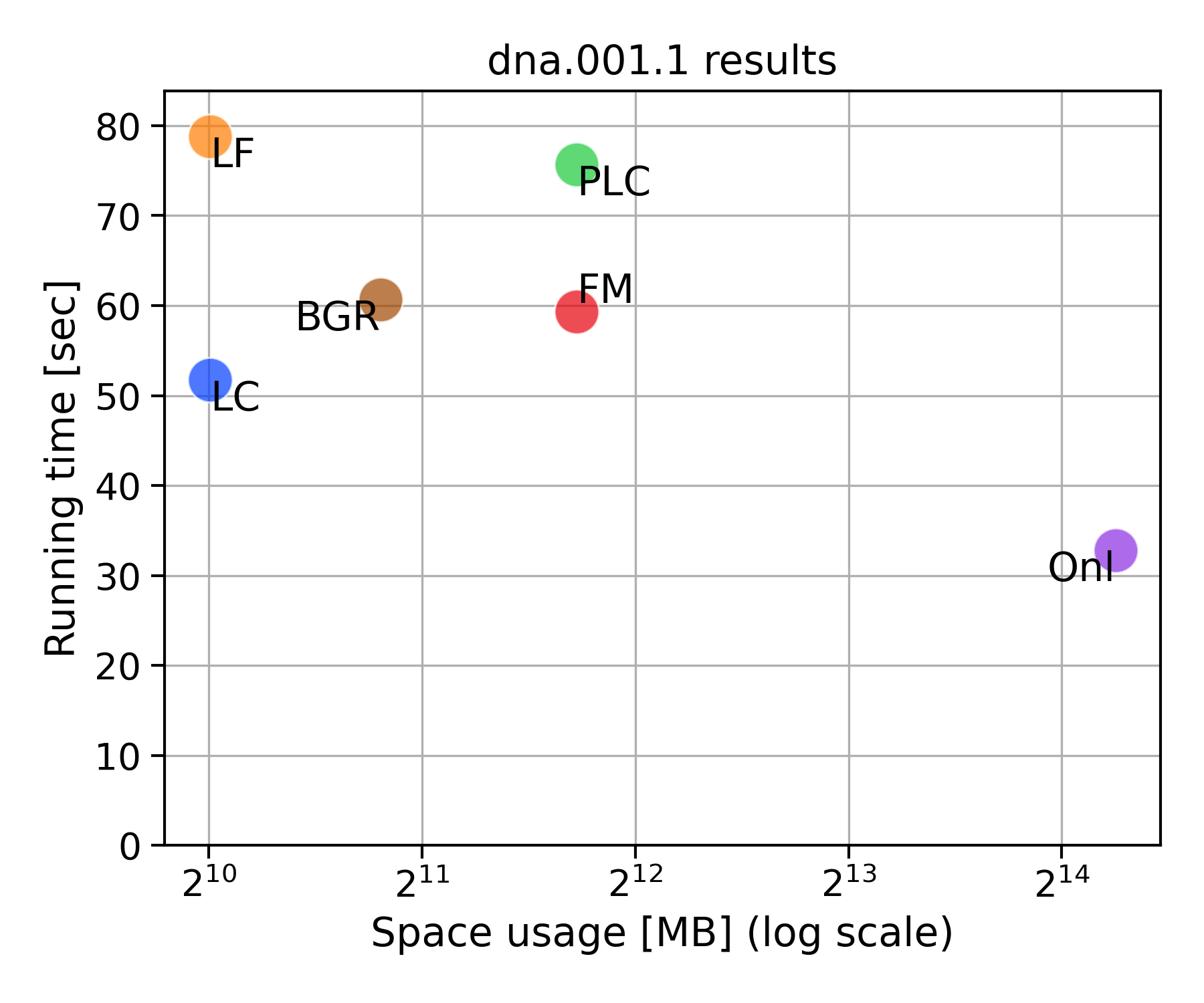} 
    \end{minipage}
    
    \bigskip
    
     \centering
 \begin{adjustbox}{width=\textwidth}
 \begin{tabular}{|l|r|r||r|r|r|r|r|r||r|r|r|r|r|r|} \hline
 \multicolumn{3}{|c||}{} & \multicolumn{6}{|c||}{\textbf{space (MB)}} & \multicolumn{6}{|c|}{\textbf{time (sec)}} \\ \hline
 Dataset $T$ & \multicolumn{1}{c|}{length $n$} & \multicolumn{1}{c||}{$\chi$} & \multicolumn{1}{|c|}{\textbf{LC}} & \multicolumn{1}{|c|}{\textbf{LF}} & \multicolumn{1}{|c|}{\textbf{PLC}} & \multicolumn{1}{|c|}{\textbf{FM}} & \multicolumn{1}{|c|}{\textbf{BGR}} & \multicolumn{1}{|c||}{\textbf{Onl}} & \multicolumn{1}{|c|}{\textbf{LC}} & \multicolumn{1}{|c|}{\textbf{LF}} & \multicolumn{1}{|c|}{\textbf{PLC}} & \multicolumn{1}{|c|}{\textbf{FM}} & \multicolumn{1}{|c|}{\textbf{BGR}} & \multicolumn{1}{|c|}{\textbf{Onl}} \\\hline
 Cere & 461,286,644 & 9,921,698 & $4509$ & $4509$  & $14872$ & $14873$ & $7954$ & $85336$ & $259.55$ & $368.95$ & $311.72$ & $340.68$ & $321.02$ & $182.15$ \\
 E.\ Coli & 112,689,515 & 13,119,749 & $1160$ & $1160$ & $3638$ & $3638$ & $2178$ & $20371$ & $59.74$ & $76.58$ & $85.20$ & $71.60$ & $68.96$ & $62.38$ \\
 Influenza & 154,808,555 & 2,225,543 & $1516$ & $1516$ & $4995$ & $4995$ & $2651$ & $28330$ & $72.73$ & $118.95$ & $77.65$ & $78.76$ & $84.81$ & $46.74$ \\
 Para & 429,265,758 & 13,382,184 & $4197$ & $4196$ & $13840$ & $13841$ & $7439$ & $79548$  & $300.61$ & $389.34$ & $321.62$ & $318.45$ & $312.66$ & $157.63$ \\
 dna & 403,927,746 & 215,193,300 & $5686$ & $5686$ & $13024$ & $13024$ & $10523$ & $62112$ & $387.57$ & $448.71$ & $411.25$ & $439.21$ & $392.48$ & $553.33$ \\
 dna.001.1 & 104,857,600 & 1,414,698 & $1028$ & $1028$ & $3386$ & $3386$ & $1790$ & $19500$ & $51.74$ & $78.84$ & $75.66$ & $59.32$ & $60.65$ & $32.78$  \\
% Einstein & 467,626,544 & 200,245 & $4571$ & $4571$ & $15077$ & $15077$ & $85854$ & $268.68$ & $359.12$ & $301.32$ & $310.3$ & $118.46$ \\
 \hline
 \end{tabular}
 \end{adjustbox}

    \caption{Space usage and running time comparison between all construction algorithms on DNA sequences from Pizza\&Chili collection. Logscale in the space coordinate.
    }
    \label{fig:results}
\end{figure}

The results on DNA are shown in Figure~\ref{fig:results}. We can see that \textrm{LC} used the same space as \textrm{LF}, 1.7--1.9 times less space than \textrm{BGR}, 3.1--3.3 times less space than \textrm{PLC} and \textrm{FM}, and 17--19 times less space than \textrm{Onl} (except on dna, where the ratios are 2.3 and 11, respectively). On the other hand, \textrm{LC} was 14\%--39\% faster than \textrm{LF}, 1\%--24\% faster than \textrm{BGR}, 6\%--32\% faster than \textrm{PLC}, and 6\%--24\% faster than \textrm{FM}. Compared to \textrm{Onl}, \textrm{LC} gave mixed results, being from 30\% faster to 48\% slower. From those results, we conclude that \textrm{LC} outperforms all the other algorithms in {\em both} time and space, with the exception of the running time of \textrm{Onl}, where the comparison is mixed. It must be noted that the good time performance of \textrm{Onl} comes at the price of using a large amount of space (nearly 20 times that of \textrm{LC}) which is prohibitive in many practical settings.

Figure~\ref{fig:results2} shows the case of some non-DNA representative texts: natural language and source code. The results stay in the same line as on DNA. We can see that \textrm{Onl} performs noticeably better when $\chi$ is small, while the other constructions (including ours) are much more sensitive to $n$.

\begin{figure}[t]
    \centering
    \begin{minipage}{0.49\textwidth}
        \centering
        \includegraphics[width=\textwidth]{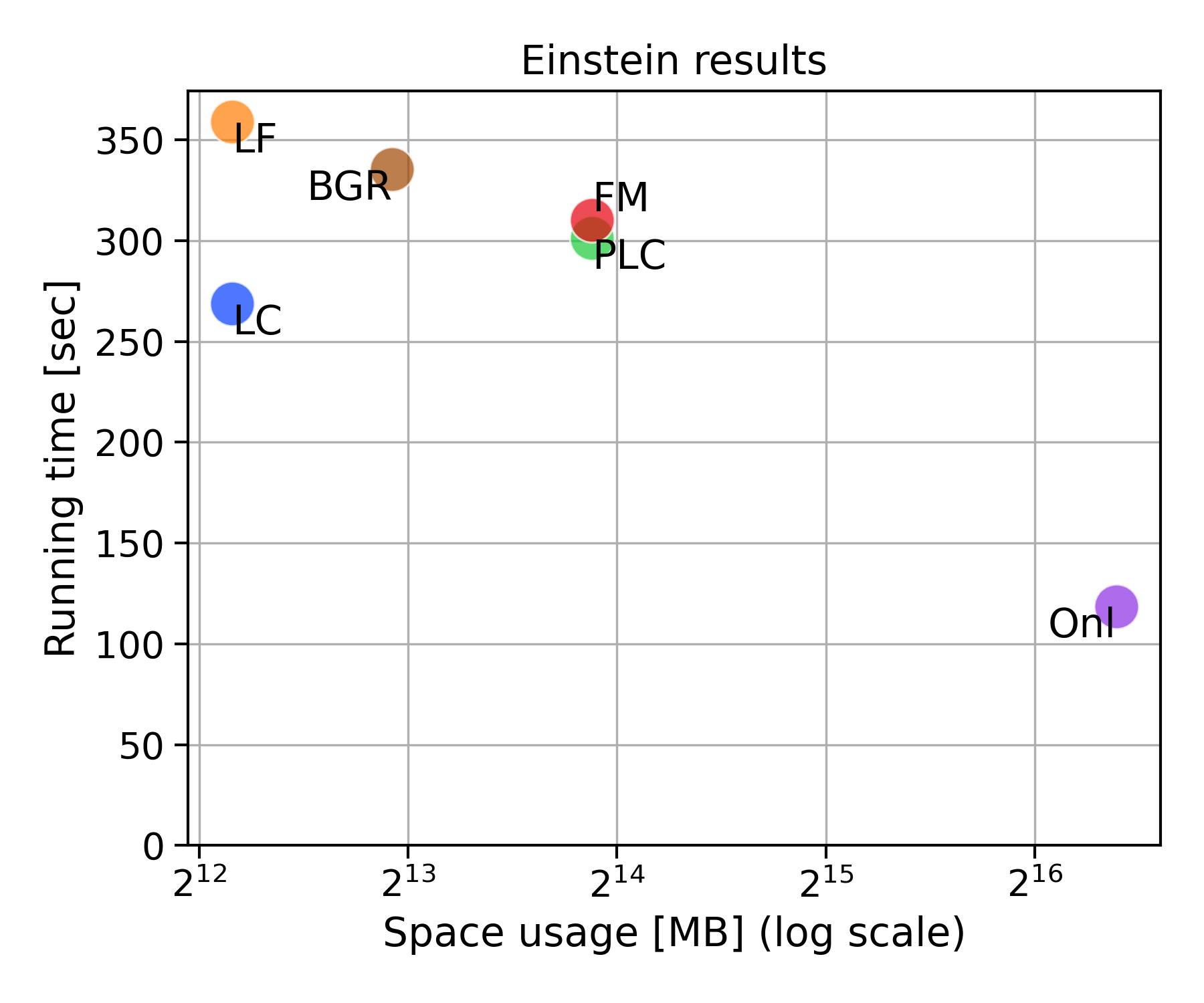} 
    \end{minipage}
    \begin{minipage}{0.49\textwidth}
        \centering
        \includegraphics[width=\textwidth]{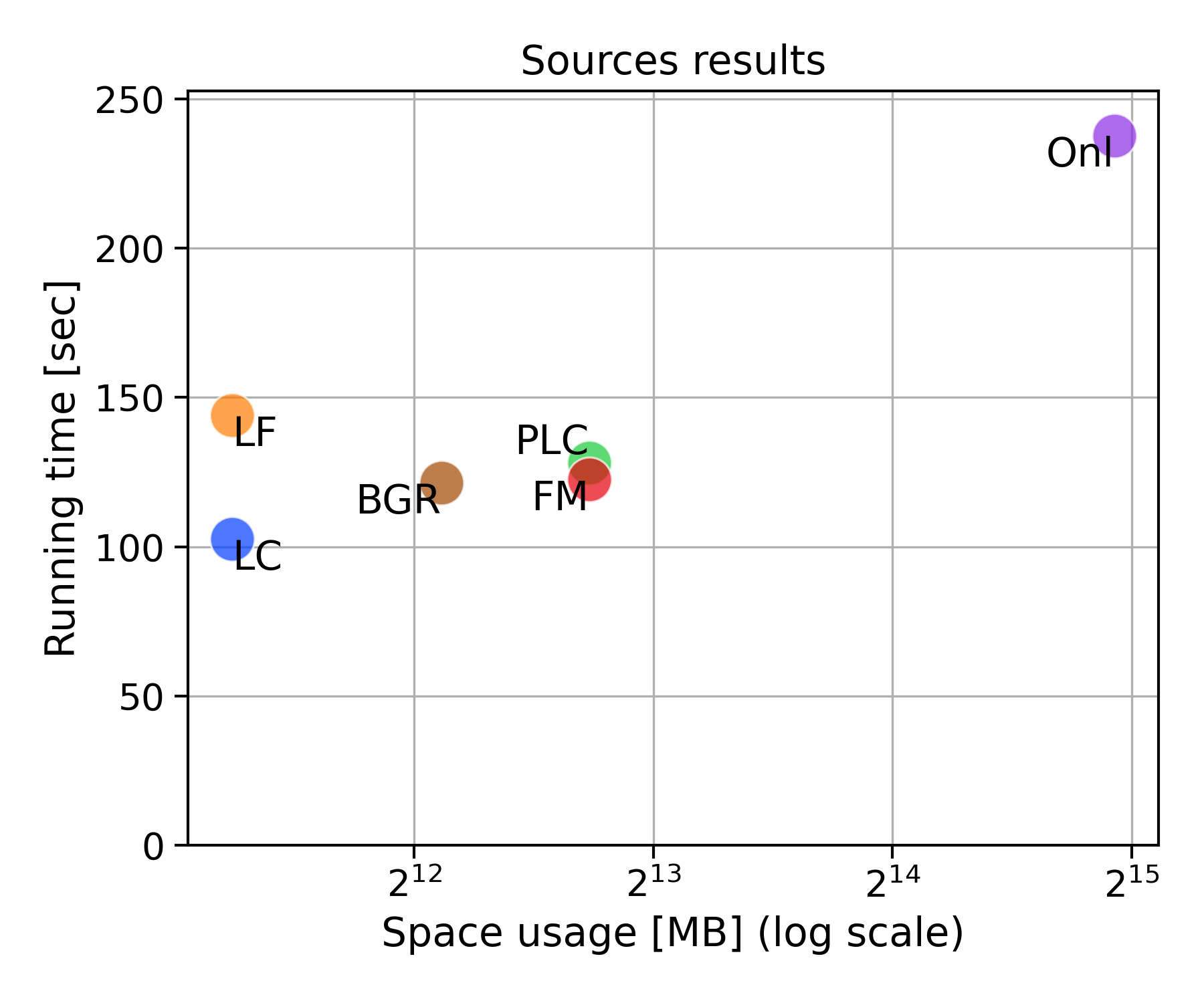} 
    \end{minipage}
     \centering
 \begin{adjustbox}{width=\textwidth}
 \begin{tabular}{|l|r|r||r|r|r|r|r|r||r|r|r|r|r|r|} \hline
 \multicolumn{3}{|c||}{} & \multicolumn{6}{|c||}{\textbf{space (MB)}} & \multicolumn{6}{|c|}{\textbf{time (sec)}} \\ \hline
 Dataset $T$ & \multicolumn{1}{c|}{length $n$} & \multicolumn{1}{c||}{$\chi$} & \multicolumn{1}{|c|}{\textbf{LC}} & \multicolumn{1}{|c|}{\textbf{LF}} & \multicolumn{1}{|c|}{\textbf{PLC}} & \multicolumn{1}{|c|}{\textbf{FM}}  & \multicolumn{1}{|c|}{\textbf{BGR}} & \multicolumn{1}{|c||}{\textbf{Onl}} & \multicolumn{1}{|c|}{\textbf{LC}} & \multicolumn{1}{|c|}{\textbf{LF}} & \multicolumn{1}{|c|}{\textbf{PLC}} & \multicolumn{1}{|c|}{\textbf{FM}} & \multicolumn{1}{|c|}{\textbf{BGR}} & \multicolumn{1}{|c|}{\textbf{Onl}} \\\hline
 Einstein & 467,626,544 & 200,245 & $4571$ & $4571$ & $15077$ & $15077$ & $7776$ & $85854$ & $268.68$ & $359.12$ & $301.32$ & $310.3$ & $335.43$ & $118.46$ \\
 Sources & 210,866,607 & 36,693,523 & $2416$ & $2416$ & $6802$ & $6802$ & $4431$ & $31159$ & $102.54$ & $143.96$ & $128.1$ & $122.57$ & $121.47$ & $237.51$ \\ 
 \hline
 \end{tabular}
 \end{adjustbox}

    \caption{Space usage and running time comparison between all construction algorithms on \textbf{einstein.en.txt} and \textbf{sources} sequences from Pizza\&Chili collection. Logscale in the space coordinate.
    }
    \label{fig:results2}
\end{figure}

\section{Conclusions and Future Work}
We have proposed the first algorithm that is (i) one-pass (over the suffix-array domain), (ii) linear-time, and (iii) implemented, using the same space as the linear-time implementation using least space \cite{TOCSpaper} {\em and} faster than the fastest implementation in that paper. Our algorithm also outperforms our implementation of that of Bonizzoni et al.~\cite{BonizzoniGR2026} in time and space, and shows to be more practical than an online algorithm \cite{fujimaru2026}, in the sense that uses  much less space (by a factor around 20), being faster on some inputs and slower on others. We do not know of other implementations.

As a future line of research, we plan to evaluate if algorithm \textrm{FM} \cite{TOCSpaper} admits the same transformation we made on \textrm{PLC} to obtain \textrm{LC}: computing $\PSV$ and $\NSV$ values \textit{on the fly} and just for run-breaks. That would yield a one-pass version of \textrm{FM} as well. We believe that, with that transformation, \textrm{FM} would be even faster than \textrm{LC} while not using more space.

Being one-pass opens the door to running the construction in streaming mode, where the suffix-array related structures of the reverse text are visited sequentially and not maintained in main memory. A relevant measure for such a streamed construction is the amount of main memory needed. Our algorithm needs only $O(\overline{r})$ main memory space for NSV because it is built only on run breaks, and $O(\sigma)$ for other structures, but it stores an amount of PSV array cells that is proportional to the height of the suffix tree. While small in practice, this height can be up to $\Theta(n)$ in pathologic cases. Reducing this extra space, say to $O(\overline{r})$ as well, would yield a streamed construction whose main memory usage is sensitive to the compressibility of the text.

We also plan to code \textrm{LC} (and eventually the one-pass version of \textrm{FM}) on top of \textrm{PFP} \cite{BoucherGKLMM19}. Cenzato et al.~\cite{TOCSpaper} experiment with running the algorithms over the concatenation of many copies of large DNA sequences, producing huge highly-repetitive files. In such context, the \textrm{PFP} version of the one-pass algorithm performed much better than any of the linear-time algorithms. We believe the \textrm{PFP}-based version of \textrm{LC} can be even faster in that setting.
%
% ---- Bibliography ----
%
% BibTeX users should specify bibliography style 'splncs04'.
% References will then be sorted and formatted in the correct style.
%
\bibliographystyle{splncs04}
\bibliography{biblio}

@article{koppl2026,
      title={Smallest suffixient set maintenance in near-real-time}, 
      author={Dominik Köppl and Gregory Kucherov},
      year={2026},
      eprint={2604.27548},
      archivePrefix={arXiv},
      primaryClass={cs.DS},
      url={https://arxiv.org/abs/2604.27548}, 
      journal      = {CoRR},
      volume       = {abs/2604.27548},
      note = "To appear in {\em MFCS'26}"
}

@TECHREPORT { 
        BW94,  
        AUTHOR = "M. Burrows and D. Wheeler",
        TITLE = "A block sorting lossless data compression algorithm",
        INSTITUTION = "Digital Equipment Corporation",
        NUMBER = 124,
        YEAR = 1994
        }

@ARTICLE{
        MNSV09,
        AUTHOR = "V. M{\"a}kinen and G. Navarro and J. Sir{\'e}n and
                  N. V{\"a}lim{\"a}ki",
        TITLE = "Storage and Retrieval of Highly Repetitive Sequence 
                 Collections",
        JOURNAL = {Journal of Computational Biology},
        YEAR = "2010",
        VOLUME = 17,
        NUMBER = 3,
        PAGES = "281--308",
        doi = "10.1089/cmb.2009.0169"
    }

@ARTICLE 
        { CGNtalg25,
          AUTHOR = "D. Cobas and T. Gagie and G. Navarro",
	  TITLE = "Fast and Small Subsampled R-indexes",
	  JOURNAL = "ACM Transactions on Algorithms",
          YEAR = 2025,
	  VOLUME = 22,
	  NUMBER = 1,
	  PAGES = "article 7",
    doi = {10.1145/3750729}
        }

@INPROCEEDINGS{
        NRUspire25.2,
        AUTHOR = "G. Navarro and G. Romana and C. Urbina",
	      TITLE = "Smallest Suffixient Sets as a Repetitiveness Measure",
        BOOKTITLE = {Proc. 32nd International Symposium on String Processing and Information Retrieval (SPIRE)},
        pages="217--232",
        YEAR = "2025",
        doi = "10.1007/978-3-032-05228-5_18",
    }

@article{FerraginaM05,
  author       = {Paolo Ferragina and
                  Giovanni Manzini},
  title        = {Indexing compressed text},
  journal      = {Journal of the ACM},
  volume       = {52},
  number       = {4},
  pages        = {552--581},
  year         = {2005},
  doi          = {10.1145/1082036.1082039}
}

@article{GagieNP20,
  author       = {Travis Gagie and
                  Gonzalo Navarro and
                  Nicola Prezza},
  title        = {Fully Functional Suffix Trees and Optimal Text Searching in BWT-Runs
                  Bounded Space},
  journal      = {Journal of the ACM},
  volume       = {67},
  number       = {1},
  pages        = {2:1--2:54},
  year         = {2020},
  doi          = {10.1145/3375890}
}

@article{TOCSpaper,
  author       = {Davide Cenzato and Lore Depuydt and Travis Gagie and Sung-Hwan Kim and Giovanni Manzini and Francisco Olivares and Nicola Prezza},
  title        = {Suffixient Arrays: a New Efficient Suffix Array Compression Technique}, 
  journal      = {CoRR},
  volume       = {abs/2407.18753},
  year         = {2025},
  doi          = {10.48550/ARXIV.2407.18753}
}

@article{Depuydt24,
  author       = {Lore Depuydt and
                  Travis Gagie and
                  Ben Langmead and
                  Giovanni Manzini and
                  Nicola Prezza},
  title        = {Suffixient Sets},
  journal      = {CoRR},
  volume       = {abs/2312.01359},
  year         = {2023},
  doi          = {10.48550/ARXIV.2312.01359}
}

@inproceedings{CenzatoOP24,
  author       = {Davide Cenzato and
                  Francisco Olivares and
                  Nicola Prezza},
  title        = {On Computing the Smallest Suffixient Set},
  booktitle    = {Proc. 31st International Symposium on String Processing and Information Retrieval (SPIRE)},
  !series       = {Lecture Notes in Computer Science},
  !volume       = {14899},
  pages        = {73--87},
  year         = {2024},
  doi          = {10.1007/978-3-031-72200-4\_6}
}

@article{BerkmanSV93,
    author = {Omer Berkman and Baruch Schieber and Uzi Vishkin},
    title = {Optimal Doubly Logarithmic Parallel Algorithms Based On Finding All Nearest Smaller Values},
    journal = {Journal of Algorithms},
    volume = {14},
    number = {3},
    pages = {344--370},
    year = {1993},
    doi = {10.1006/jagm.1993.1018},
}

@InProceedings{BonizzoniGR2026,
  author =	{Bonizzoni, Paola and Gao, Younan and Riccardi, Brian},
  title =	{{Constructing Suffixient Arrays Revisited}},
  booktitle =	{Proc. 37th Annual Symposium on Combinatorial Pattern Matching (CPM)},
  pages =	{30:1--30:18},
  !series =	{Leibniz International Proceedings in Informatics (LIPIcs)},
  !ISBN =	{978-3-95977-420-8},
  !ISSN =	{1868-8969},
  year =	{2026},
  !volume =	{369},
  !editor =	{Bille, Philip and Prezza, Nicola},
  !publisher =	{Schloss Dagstuhl -- Leibniz-Zentrum f{\"u}r Informatik},
  !address =	{Dagstuhl, Germany},
  !URL =		{https://drops.dagstuhl.de/entities/document/10.4230/LIPIcs.CPM.2026.30},
  !URN =		{urn:nbn:de:0030-drops-259564},
  doi =		{10.4230/LIPIcs.CPM.2026.30},
  !annote =	{Keywords: Suffixient set, suffixient array, right-maximal substring, linear-time algorithm}
}

@misc{fujimaru2026,
      title={Smallest Suffixient Sets: Effectiveness, Resilience, and Calculation}, 
      author={Hiroto Fujimaru and Gonzalo Navarro and Giuseppe Romana and Cristian Urbina},
      year={2026},
      eprint={2506.05638},
      archivePrefix={arXiv},
      primaryClass={cs.FL},
      url={https://arxiv.org/abs/2506.05638}, 
      note = "To appear in {\em Theoretical Computer Science}"
}

@inproceedings{Fuji26,
      title = "Computing Smallest Suffixient Arrays in Almost Sublinear Time",
      author= "Hiroto Fujimaru and Gonzalo Navarro and Francisco Olivares and Jakub Radoszewski and Giuseppe Romana and Cristian Urbina",
      year = 2026,
      booktitle    = {Proc. 33rd International Symposium on String Processing and Information Retrieval (SPIRE)},
      note = "To appear"
}

@misc{cenzato2025testingsuffixientsets,
      title={Testing Suffixient Sets}, 
      author={Davide Cenzato and Francisco Olivares and Nicola Prezza},
      year={2025},
      eprint={2506.08225},
      archivePrefix={arXiv},
      primaryClass={cs.DS},
      url={https://arxiv.org/abs/2506.08225}, 
}

@article{BoucherGKLMM19,
  author       = {Christina Boucher and
                  Travis Gagie and
                  Alan Kuhnle and
                  Ben Langmead and
                  Giovanni Manzini and
                  Taher Mun},
  title        = {Prefix-free parsing for building big {BWT}s},
  journal      = {Algorithms for Molecular Biology},
  volume       = {14},
  number       = {1},
  pages        = {13:1--13:15},
  year         = {2019},
  doi          = {10.1186/S13015-019-0148-5}
}

@InProceedings{sublinear_suffix_array_construction,
author = {Dominik Kempa and Tomasz Kociumaka},
title = {Breaking the {O}(n)-Barrier in the Construction of Compressed Suffix Arrays and Suffix Trees},
booktitle = {Proc. Annual ACM-SIAM Symposium on Discrete Algorithms (SODA)},
pages = {5122-5202},
doi = {10.1137/1.9781611977554.ch187},
!URL = {https://epubs.siam.org/doi/abs/10.1137/1.9781611977554.ch187}
}

\end{document}